\theoremstyle{plain}
\newtheorem{lemma}{Lemma}
\newtheorem{theorem}{Theorem}
\newtheorem{corollary}{ Corollary }
\newtheorem{proposition}{Proposition}
\theoremstyle{definition}
\begin{document}
 
\title{Wigner law for matrices with dependent entries - 
a perturbative 
approach}
\author{T. Krajewski, A. Tanasa, D. L. Vu}

\maketitle
\abstract{We show that Wigner semi-circle law holds for Hermitian matrices with dependent entries,  provided the deviation of the cumulants from the normalised Gaussian case obeys a simple power law bound in the size of the matrix.
To establish this result, we use replicas interpreted as a zero-dimensional quantum field theoretical model whose effective potential obey a renormalisation group equation.
}

\medskip

\noindent
Keywords: Wigner law, replica trick, Feynman diagrams, zero-dimensional quantum field theory

\section{Introduction}

In the last decade or so, several extensions of Wigner law for matrices with dependent entries have been considered, see for instance \cite{dependent1}. In this paper, the authors impose bounds on the number of entries of the matrices that are correlated. In \cite{dependent2}, a generalisation of this result was proven. Our approach is complementary: we do not impose such restrictions but we assume that the size of the correlations go to zero as $N$, the size of the matrix, becomes large. Another result concerning Wigner law for matrices with dependent entries have been obtained in  \cite{dependent3}, where the matrices considered are real-valued, symmetric and have  stochastically independent diagonals.

To establish our result we use the replica method. For a standard reference on the use of replica techniques in the context of random matrix theory, we refer the interested reader to \cite{replicas-book}, see also \cite{replica} or \cite{replicas2}.  In this paper we interpret replicas as fields of a zero-dimensional quantum field theoretical model and use an analogue of the renormalisation group equation.  We give conditions on the joint cumulants of the entries of the matrix, see Theorem \ref{Wignerdependent:thm}, under which the moments of the eigenvalue distribution  converges towards the Wigner semi-circle law. 

For the sake of completeness, let us also mention that Wigner law has been studied by mathematical physicists using the so-called supersymmetric technique (see, for example, \cite{ami} and \cite{ami2}), technique which uses calculus over commuting and anti-commuting variables.

In the first part of the paper, we present our result and illustrate it for Wigner matrices and for invariant matrices. The second part is devoted to the proof.

\section{Statement of the main result; a few illustrations}

\subsection{Semi-circle law from a condition on cumulants}

Let us consider a probability law on Hermitian $N\times N$ matrices given by the joint probability density  $\rho_{N}$ for the real diagonal elements and the complex upper diagonal ones, the lower diagonal ones being recovered by complex conjugation.  We assume that the joint cumulants exist for all $N$ and collect them in their generating function defined as
\begin{equation}
\log\langle\exp \text{Tr}(MJ)\rangle=\log\int dM \rho_{N}(M)\exp \text{Tr}(MJ)
\end{equation}
where $dM$ is the product of the Lebesgue measures on the entries of $M$. The source $J$ is another Hermitian $N\times N$ matrix and the cumulants are obtained by derivation with respect to $J$ at the origin,
\begin{equation}
\frac{\partial}{\partial J_{j_{1}i_{1}}}\dots\frac{\partial}{\partial J_{j_{k}i_{k}}}\log\langle\exp \text{Tr}(MJ)\rangle\Big|_{J=0}=\langle M_{i_{1}j_{1}}\cdots M_{i_{k}j_{k}}\rangle_{\text{c}},
\end{equation}
where we used the subscript c because in field theory they correspond to connected correlation functions. Alternatively, the moments are denoted by $\langle M_{i_{1}j_{1}}\cdots M_{i_{k}j_{k}}\rangle$.

Let us associate to each cumulant an oriented graph $G$ constructed as follows. The vertices of $G$ are given by the distinct indices appearing in the cumulant. Let us emphasize that it is fundamental that the indices attached to the vertices are all different, this can be achieved by inserting $1=\delta_{ij}+1-\delta_{ij}$ for every pair of indices. We draw an arrow oriented from the vertex associated to $i$ to the vertex associated to $j$ if the cumulant involves a matrix element $M_{ij}$, see figure \ref{cumulant:fig} for a few examples.
\begin{figure}

\begin{equation*}
\begin{array}{cc}
\includegraphics[width=4cm]{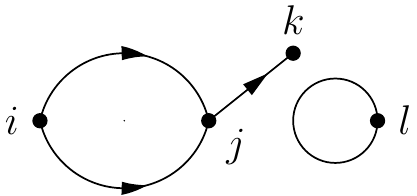}\qquad&\qquad\includegraphics[width=2cm]{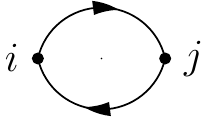}\\
\langle (M_{ij})^{2}M_{jk}M_{ll}\rangle_{\text{c}}\qquad&\qquad\langle M_{ij}M_{ji}\rangle_{\text{c}}
\end{array}
\end{equation*}
\caption{Oriented graphs associated to some cumulants}
\label{cumulant:fig}
\end{figure}
Then, we consider the cumulants as functions $C_{G}$ on the vertex indices, labelled by the graphs $G$,  $i_{1},..., i_{v(G)}\in\{1,\dots,N\}^{v(G)}\mapsto C_{G}(i_{1},..., i_{v(G)})$, with $v(G)$ the number of vertices of $G$.
The generating function of the cumulants can be written as
\begin{equation}
\log\langle\exp \text{Tr}(MJ)\rangle=
\sum_{G\text{ oriented graph}}\frac{1}{|\text{Aut}(G)|}\sum_{1\leq i_{1},\dots, i_{v(G)}\leq N\atop\text{all different}}C_{G}(i_{1},\dots, i_{v(G)})\prod_{e\text{ edge}}J_{i_{s(e)}
i_{t(e)}}\label{cumulantexpansion:eq}
\end{equation}
where $s(e)$ and $t(e)$ are the source and the target vertices of $e$. For a fixed graph, we consider $C_{G}$ as a function of the indices attached to the vertices. For example, in the Gaussian case $\rho(M)\propto\exp-\big(\frac{1}{2\alpha^{2}}\text{Tr}\,M^{2}\big)$, the only non vanishing cumulant is $\langle M_{ij} M_{kl}\rangle_{\text{c}}=\alpha^{2}\delta_{il} \delta_{jk}$ and corresponds either to an oriented cycle with two vertices (when $i\neq j$) or to a graph with a single vertex and two edges (when $i=j$).

In this letter, we are interested in the normalised density of  eigenvalues of a random $N\times N$ Hermitian matrix 
\begin{equation}
   \rho_{N}(\lambda)=
   \frac{1}{N}\sum_{1\leq i\leq N}
   \big\langle\delta(\lambda-\lambda_{i})\big\rangle,
\end{equation}
where $\lambda_{i}$ are the eigenvalues of the random matrix $\frac{M}{\sqrt{N}}$ and $\delta$ the Dirac distribution. The average is computed using a probability density $\rho_{N}(M)$. Note that we have used the same letter to denote the density of eigenvalues $\rho_{N}(\lambda)$ and the probability law on the space of matrices $\rho_{N}(M)$ in order to simplify the notations.

To state our main result, we impose different conditions on the cumulants, depending on whether $G$ is Eulerian or not. Recall that an oriented graph $G$ is Eulerian if every vertex of $G$ has an equal number of incoming and outgoing edges 

\begin{theorem}
\label{Wignerdependent:thm}
Let $\rho_{N}$ be a probability law on the space of Hermitian $N\times N$ matrices $M$ such that its cumulants can be decomposed as 
$C_{G}=C_{G}^{'}+C_{G}^{''}$, with $C_{G}^{'}$ a Gaussian cumulant and $C_{G}^{''}$ a perturbation such that, uniformly in the vertex indices $i_{1},..., i_{v(G)}$,
\begin{itemize}
   \item ${\displaystyle \lim_{N\rightarrow\infty}\, N^{v(G)-c(G)-e(G)/2}C_{G}(i_{1},..., i_{v(G)})=0}$ if $G$  is Eulerian,
    \item ${\displaystyle N^{v(G)-c(G)-e(G)/2}C_{G}(i_{1},..., i_{v(G)})}$ is bounded if $G$ is not Eulerian,
\end{itemize}
where $v(G)$, $e(G)$, $c(G)$ are the number of vertices, edges and connected components of $G$. Then, the moments of the eigenvalue distribution of the matrix $\frac{M}{\sqrt{N}}$ converge towards the moments of the semi-circle law, with $\alpha$ given by the Gaussian cumulant  $\langle M_{ij} M_{kl}\rangle_{\text{c}}=\alpha^{2}\delta_{il} \delta_{jk}$,
\begin{equation}
   \lim_{N\rightarrow\infty} \int_{\mathbb R} d\lambda \,\lambda^{k}\rho_{N}(\lambda)
   =\begin{cases}
   \frac{1}{2\pi \alpha^{2}}\int_{-2\alpha}^{2\alpha} d\lambda\,\lambda^{k}\sqrt{4\alpha^{2}-\lambda^{2}}&\text{if $k$ is even},\\
0&\text{if $k$ is odd}.\label{semicircle:eq}
\end{cases}
\end{equation}

\end{theorem}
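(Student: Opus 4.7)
The plan is the moment method: I would show that
\[
m_k^{(N)} := \int \lambda^k\,\rho_N(\lambda)\,d\lambda = \frac{1}{N^{1+k/2}}\sum_{1\leq i_1,\ldots,i_k\leq N}\langle M_{i_1i_2}M_{i_2i_3}\cdots M_{i_ki_1}\rangle
\]
tends to $\alpha^k C_{k/2}$ (with $C_{k/2}$ the $(k/2)$-th Catalan number) when $k$ is even, and to $0$ when $k$ is odd, which is the moment characterisation of the semi-circle law. The first step is to use the moment-cumulant formula to expand the $k$-point expectation as a sum over set partitions $\pi$ of $\{1,\ldots,k\}$ of products of joint cumulants, and to substitute the decomposition $C=C'+C''$ in each block. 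The purely Gaussian term reproduces the classical Wigner computation: one groups the index configurations by their coincidence pattern $\sigma$, the propagator $\langle M_{ij}M_{kl}\rangle_{\text{c}}=\alpha^{2}\delta_{il}\delta_{jk}$ forces pair contractions, and the standard bound $D(\sigma)\leq k/2+1$ (with equality exactly at non-crossing pair partitions, $D(\sigma)$ being the number of distinct indices) isolates the $N^{0}$ contributions, which sum to $C_{k/2}\alpha^k$.

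The substantial work is to show that every term involving at least one $C''$-block vanishes as $N\to\infty$. For fixed $(\pi,\sigma)$, each block $B\in\pi$ carries an oriented graph $G_{B,\sigma}$ in the paper's sense, with $v_B$ vertices, $e_B=|B|$ edges and $c_B$ components. The hypothesis gives $|C''_{G_{B,\sigma}}|\lesssim N^{-v_B+c_B+e_B/2}$, with an extra $o(1)$ prefactor when $G_{B,\sigma}$ is Eulerian, while the Gaussian blocks obey the same bound. Multiplying over blocks (using $\sum_B e_B=k$) and summing $\sim N^{D(\sigma)}$ over compatible index assignments bounds the $(\pi,\sigma)$-contribution to $m_k^{(N)}$ by $N^{D(\sigma)+\sum_B(c_B-v_B)-1}$, possibly times $o(1)$. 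The key topological inequality is
\[
D(\sigma)-1 \leq \sum_{B\in\pi}(v_B-c_B),
\]
which I would prove by observing that $\bigcup_B G_{B,\sigma}$ is connected — it is the trace cycle after index identification — so the union of one spanning forest per block is a connected spanning subgraph on $D(\sigma)$ vertices, has exactly $\sum_B(v_B-c_B)$ edges (by edge-disjointness of the blocks), and therefore satisfies the inequality.

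This makes the $N$-exponent non-positive. The crux is then the combinatorial lemma that equality $\sum_B(v_B-c_B)=D(\sigma)-1$ forces every $G_{B,\sigma}$ to be Eulerian: an Eulerian defect in one block would have to be cancelled by a compensating defect in another block sharing that index, but the spanning-forest tree structure imposed by saturation leaves no room for such cross-block compensation compatible with the Eulerian trace cycle. Granted this lemma, any $C''$-contribution sitting at the critical exponent necessarily involves an Eulerian block and therefore carries the $o(1)$ factor of the hypothesis, while any $C''$-contribution off the critical exponent is $o(1)$ by the topological bound. Only the Gaussian non-crossing pair partitions survive, and they sum to $C_{k/2}\alpha^k$. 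The hard step is precisely this combinatorial lemma, and this is exactly what the paper's replica / zero-dimensional QFT framework handles systematically: the Feynman-diagram expansion of the replica effective action separates Gaussian propagators from perturbative vertices, and the renormalisation-group equation compactly encodes the above $N$-scalings in a single flow.
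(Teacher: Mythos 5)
Your proposal is essentially correct, but it follows a genuinely different route from the paper. The paper never runs the moment method: it computes the resolvent $G(z)$ with bosonic replicas, encodes all cumulants in an effective potential $V(t;X,X^{\dagger})$ obeying a zero-dimensional Polchinski equation \eqref{Polchinski:eq}, and proves by induction on the order in $t=1/z$ (Proposition \ref{bound:pro}) that the scaling hypotheses of the theorem are stable under this flow --- the Eulerian/non-Eulerian dichotomy being propagated through the two elementary graph operations of \eqref{integral:eq} via Lemma \ref{lemmabound:lem} --- so that the non-Gaussian part of the single coefficient entering \eqref{Greenpotential:eq} vanishes and $G$ solves the self-consistent quadratic equation giving the semi-circle. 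Your route is precisely the moment--cumulant expansion that the paper only sketches heuristically in the paragraph following the theorem (you also correctly read the hypotheses as bearing on $C''_{G}$, which is what Proposition \ref{bound:pro} confirms is intended). What your approach buys is elementarity and a transparent identification of exactly which configurations saturate the power counting, namely your inequality $D(\sigma)-1\leq\sum_{B}(v_{B}-c_{B})$, whose spanning-forest proof is sound (connectivity of the union is preserved when each block graph is replaced by a spanning forest, since the component structure of each block is unchanged). What the paper's approach buys is that the combinatorial bookkeeping is reduced to two local moves on graphs, each manifestly preserving the bounds, rather than controlling an arbitrary coincidence pattern of a length-$k$ trace in one shot; it also produces the resolvent directly rather than moment by moment.

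The one place where your argument is not self-contained is the combinatorial lemma that saturation forces every block to be Eulerian; you assert it with a heuristic about ``no room for cross-block compensation'' but do not prove it. It is true, and closes as follows. Equality $\sum_{B}(v_{B}-c_{B})=D(\sigma)-1$ says that the hypergraph on the $D(\sigma)$ distinct indices whose hyperedges are the vertex sets of the connected components of the block graphs is a hypertree. Take a leaf component $K$: it meets the union of all other components in exactly one vertex $w$ (two shared vertices would close a cycle and contradict saturation). Every vertex of $K$ other than $w$ thus has all of its incident edges of the full trace multigraph inside $K$; since the closed walk $i_{1}\to i_{2}\to\cdots\to i_{k}\to i_{1}$ makes the full multigraph balanced at every vertex, $K$ satisfies $d_{+}(v)=d_{-}(v)$ at every $v\neq w$, hence also at $w$ because $\sum_{v\in K}\big(d_{+}(v)-d_{-}(v)\big)=0$. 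So $K$ is Eulerian; remove its edges (the remainder stays balanced) and induct on the hypertree. With this in place every $C''$ block at the critical exponent is Eulerian and carries the $o(1)$ of the hypothesis, only the all-Gaussian non-crossing pairings survive, and your proof is complete.
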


The conditions are uniform in the sense that they must not depend on the vertex indices but can depend on the graph. These conditions can be understood heuristically using the moment method proof of Wigner's law (see for instance \cite{Guionnet})
\begin{equation}
\int_{\mathbb R} d\lambda \,\lambda^{k}\rho_{N}(\lambda)=
\frac{1}{N^{1+k/2}}\langle \text{Tr\,}M^{k}\rangle=\sum_{1\leq i_{1},\dots,i_{k}\leq N}\langle M_{i_{1}i_{2}}M_{i_{2}i_{3}}\cdots M_{i_{k}i_{1}}\rangle.
\end{equation}
Next, we express the moments in terms of the cumulants and label the latter by the oriented graphs $G$. The trace imposes the existence of an Eulerian cycle in these graphs, which always exists for a connected graph such that all its vertices have an equal number of incoming and outgoing edges. If not, some identifications of vertices are necessary, i.e. the vertex indices $i_{1},\dots,i_{k}$ cannot all be distinct. Then, the conditions on the cumulants imply that only the contribution of the Gaussian one survives in the limit $N\rightarrow\infty$. Also note that in the case of real symmetric matrices, the condition for Eulerian graphs apply to all graphs.

\subsection{Illustration 1: Semi-circle law for Wigner matrices}

Recall that for Wigner matrices the diagonal elements are independent and identically distributed (iid =)and the real and imaginary parts of the upper diagonal elements are also iid, independent from the diagonal ones but possibly with a different law, such that the expectation value of the off diagonal elements vanish. We further assume that all moments (thus also cumulants) remain finite as $N$ becomes large. Under these assumptions, Wigner's seminal result follows from Theorem \ref{Wignerdependent:thm}.
\begin{corollary}
For Wigner matrices with finite moments, the eigenvalue distribution converges in moments towards the semi-circle law \eqref{semicircle:eq} when $N\rightarrow\infty$, with $\alpha$ given by the Gaussian cumulant.
\end{corollary}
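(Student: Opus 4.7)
The plan is to verify the hypotheses of Theorem~\ref{Wignerdependent:thm} by exploiting the independence of the entries to drastically restrict which oriented graphs $G$ can carry a nonzero cumulant $C_G$. Since a joint cumulant of mutually independent random variables vanishes as soon as its arguments split into two nonempty independent blocks, and since (recalling $M_{ji}=\overline{M_{ij}}$) the only dependencies in the Wigner ensemble are between $M_{ij}$ and $M_{ji}$ for a single unordered pair $\{i,j\}$, only two families of graphs can survive: (a) a single vertex with $p$ self-loops, coming from cumulants of a diagonal entry $M_{ii}$; and (b) two vertices joined by $a$ edges in one direction and $b$ edges in the other, coming from joint cumulants of $M_{ij}$ and $M_{ji}$.

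I would then verify the bounds family by family. In (a) the graph is always Eulerian with $v(G)-c(G)-e(G)/2 = -p/2$. I set $C_G'$ equal to the Gaussian cumulant (namely $\alpha^{2}$ when $p=2$ and zero otherwise) and leave the rest in $C_G''$; the remainder is a finite constant by the finite-moment hypothesis, so $N^{-p/2}C_G''\to 0$ for every $p\geq 1$. In family (b) with $a=b$ the graph is Eulerian and $v-c-e/2 = 1-a$: the case $a=1$ is exactly the Gaussian covariance (so $C_G''=0$), while $a\geq 2$ yields $N^{1-a}C_G''\to 0$. For $a\neq b$ the graph is non-Eulerian and only boundedness is required; the cases $a+b=1$ and $(a,b)\in\{(2,0),(0,2)\}$ in fact produce vanishing cumulants, respectively by the zero-mean assumption and because the real and imaginary parts of $M_{ij}$ are i.i.d.\ with equal variance (forcing $\langle M_{ij}^{2}\rangle=0$), while for $a+b\geq 2$ one has $v-c-e/2\leq 0$ and $C_G$ is a finite constant, so the product is bounded uniformly in the indices.

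The main subtlety to watch is the interplay between $\alpha^{2}$ and the diagonal variance of $M_{ii}$: when these do not coincide, the discrepancy enters as a nonzero constant $C_G''$ on the single-vertex two-loop graph, and one has to confirm that the Eulerian exponent $v-c-e/2=-1$ provides enough suppression. Everything else reduces to an elementary exponent count on the two graph families identified above. Once all these cases are collected, Theorem~\ref{Wignerdependent:thm} applies verbatim and delivers the semi-circle law with the stated value of $\alpha$.
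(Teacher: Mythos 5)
Your proposal is correct and follows essentially the same route as the paper: use independence to kill all cumulants except those supported on a single vertex or a single unordered pair of indices, then verify the exponent bounds of Theorem~\ref{Wignerdependent:thm} case by case (with the off-diagonal covariance supplying the Gaussian part and the mean-zero/i.i.d.\ real-imaginary assumptions killing the $e=1$ and $\langle M_{ij}^2\rangle$ cumulants). Your explicit treatment of the diagonal variance possibly differing from $\alpha^{2}$, absorbed into $C_G''$ on the two-loop single-vertex graph with exponent $-1$, is a point the paper's proof passes over silently but handles the same way.
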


\begin{proof}
Since the matrix elements are independent, all cumulants with $v\geq 3$  vanish. Therefore, we have to check the bounds of Theorem \ref{Wignerdependent:thm} for $v=1,2$ only. For $v=1$ the condition is obviously satisfied because of the factor $N^{e/2}$ and the fact that moments and therefore also cumulants are bounded. Furthermore, if $v=2$ and $e\geq 3$, the condition is also satisfied, for the same reason. The case $v=2$ and $e=1$ corresponds to the expectation values of the off-diagonal terms and vanish identically. 
The case $v=2$ and $e=2$ remains to be studied:
\begin{itemize}
    \item With $c=2$, $C_{\includegraphics[width=1cm]{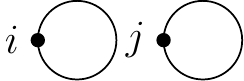}}(i,j)=\langle M_{ii}M_{jj}\rangle_{c}=
       \langle M_{ii}M_{jj}\rangle-
       \langle M_{ii}\rangle\langle M_{jj}\rangle=0$, because diagonal matrix elements are independent.
       \item   With $c=1$, non Eulerian, $C_{\includegraphics[width=1cm]{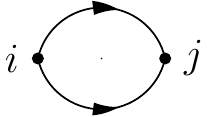}}(i,j)=\langle M_{ij}M_{ij}\rangle_{c}=
       \langle M_{ij}M_{ij}\rangle-
       \langle M_{ij}\rangle\langle M_{ij}\rangle=
       \langle (\Re M_{ij})^{2}-(\Im M_{ij})^{2}\rangle+2\text{i}
       \langle \Re M_{ij}\Im M_{ij}\rangle=0$, since the real and imaginary parts of the off diagonal elements are independent and identically distributed with mean 0.
       \item  With $c=1$, Eulerian, $C_{\includegraphics[width=1cm]{cumulant2}}=\alpha^{2}$ is the Gaussian cumulant leading to the semi-circle law. Note that only the part $i\neq j$ contributes at large $N$.
\end{itemize}\end{proof}

\subsection{Illustration 2: Unitary invariant potential}

Let us consider a single trace unitary invariant potential. In this case, the probability law is  
\begin{align}
    \rho(M)=\frac{\exp-\text{Tr}\,V(M)}{Z}\qquad\text{with}\qquad Z=\int dM \,\exp-\text{Tr}\,V(M)
\end{align}
and $V(M)$ a polynomial potential
\begin{align}
    V(M)=\frac{1}{2}M^{2}+\sum_{p\geq 3}\frac{g_{p}N^{1-p/2}}{p}M^{p}.
\end{align}
Recall that here we study the eigenvalue distribution of the matrix $M/\sqrt{N}$, the usual formulation employed in the physics literature being recovered after the rescaling $M\rightarrow \sqrt{N}\, M$.

Cumulants of order $k$ can be computed as sums of connected ribbon graphs $\Gamma$ (not to be confused with the graphs $G$ appearing in the cumulants) with $k$ univalent vertices corresponding to the insertions of the source $J$. Its dependence on $N$ reads
$N^{f(\Gamma)+\sum_{p}(1-pv_{p}(\Gamma)/2)}$
where $f(\Gamma)$ is the number of closed faces of the graph (closed cycles in the double line representation) and $v_{p}(\Gamma)$ the number of vertices of degree $p$.

Denoting by $e(\Gamma)$ the number of internal edges of the graph $\Gamma$ (edges not attached to the sources), we have $2e(\Gamma)+k=\sum_{p}pv_{p}(\Gamma)$. Moreover, the Euler characteristics of the surface with boundary in which the graph is embedded reads $2-2g(\Gamma)-b(\Gamma)=f(\Gamma)-e(\Gamma)+v(\Gamma)$,
with $v(\Gamma)=\sum_{p}v_{p}(\Gamma)$ the total number of vertices (not including the sources), $g(\Gamma)$ the genus of the surface and $b(\Gamma)$ its number of boundary components (open faces including insertions of the source). Combining these identities together, the power of $N$ in a graph contributing to a cumulant reads
\begin{equation}
    N^{2-2g(\Gamma)-b(\Gamma)-k/2}\label{unitaryscaling:eq}.
\end{equation}
The leading order contribution is obtained for planar graphs ($g=0$) with all sources in the same open face ($b=1$). 

These cumulants correspond to graphs $G$  that are oriented cycles with $e(G)=v(G)=k$ and $c(G)=1$.  According to \eqref{unitaryscaling:eq}, they scale as $N^{v(G)-c(G)-e(G)/2}$ for large $N$, thus violating the first condition on the cumulants in Theorem  \ref{Wignerdependent:thm}. Since it is known that such random matrices do not obey the semi-circle law if $V$ is not Gaussian, we conclude that the conditions in Theorem \ref{Wignerdependent:thm} are optimal, at least when formulated using vertices, edges and connected components of $G$.

\section{Replica proof of the main result}
\subsection{An expression of the Green function using replicas}

Using $\rho(\lambda)=-\frac{1}{\pi}\Im G(\lambda+\text{i}0^{+})$, the density of eigenvalues is determined by the Green function
\begin{equation}
    G(z)=\frac{1}{N}\Big\langle\text{Tr}\Big(z-\frac{M}{\sqrt{N}}\Big)^{-1}\Big\rangle=
    \frac{1}{N}\int dM\,\rho_{N}(M)\,\text{Tr}\Big(z-\frac{M}{\sqrt{N}}\Big)^{-1}
\end{equation}
To compute the resolvent, first notice that $\text{Tr}\big(z-\frac{M}{\sqrt{N}}\big)^{-1}=\frac{\partial}{\partial z}\log\det\big(z-\frac{M}{\sqrt{N}}\big)$. Then, the resolvent is computed using bosonic replicas
 \begin{equation}
   G(z)=-\frac{1}{N} \frac{\partial}{\partial z}
   \bigg(
   \int dX^{\dagger}dX\,
  \exp-z\text{Tr}(X^{\dagger}X)\,
   \Big\langle\exp\text{Tr}\Big(X^{\dagger}\frac{M}{\sqrt{N}}X\Big)\Big\rangle
   \bigg)_{\text{order 1 in $n$}}\label{replica:eq}
\end{equation}
where $X$ is a $N\times n$ complex matrix and $dX^{\dagger}dX$ is the product of Lesbegue measure over real and imaginary parts of $X$ divided by a factor of $\pi^{nN}$. Although the use of the replica technique is common in physics, let us give some explanations. We first express $\det^{-n}\big(z-\frac{M}{\sqrt{N}}\big)$ using a Gaussian integral over $n$ complex vectors, each with $N$ components that we collect in the matrix $X$. Then, its logarithm is computed using the identity $A^{n}=1+n\log A+O(n^{2})$ for $n\rightarrow 0$. In our context, we evaluate the integral \eqref{replica:eq} as a perturbation of a Gaussian integral using Feynman diagrams, thus leading to a power series in $1/z$. Because of the $\text{U}(n)$ invariance $X\rightarrow XU$ of the integral in \eqref{replica:eq}, each Feynman amplitude is a polynomial in $n$ and we retain only the term of order $n$. In perturbation theory, this is nothing but a convenient substitute for the power series of $\log\det\big(z-\frac{M}{\sqrt{N}}\big)$. Here, we stick to the perturbative approach but it is worthwhile to mention that, beyond perturbation theory, one encounters the phenomenon of replica symmetry breaking, ruining the simple polynomial dependence on $n$, see for instance \cite{Mezard}.

\subsection{Diagrammatic approach for the Gaussian case}

Before proceeding to the general case and establish Theorem \ref{Wignerdependent:thm}, 
let us consider the Gaussian case
$\rho(M)\propto\exp-\frac{\text{Tr}(M^{2})}{2\alpha^{2}}$. Performing the integral over $M$ leads to a quartic interaction for the replicas
\begin{equation}
    \Big\langle\exp\text{Tr}\Big(X^{\dagger}\frac{M}{\sqrt{N}}X\Big)\Big\rangle=
    \exp\frac{\alpha^{2}}{2N}\text{Tr}(X^{\dagger}XX^{\dagger}X).\label{Gaussian:eq}
\end{equation}
Then, we expand the integral in \eqref{replica:eq} using Feynman diagrams. The latter are ribbon graphs with double lines made of a solid line for the matrix indexes $i,j,k,...\in\left\{1,...,N\right\}$ and a dotted line for the replica indexes $a,b,c,...\in\left\{1,...,n\right\}$. The graphs that correspond to the terms of order $n$ are those  with a single dotted face. Moreover, when we take the large $N$ limit, we only retain the planar graphs. Each solid face yields a factor of $N$ which is cancelled by the factor of $1/N$ in the vertices and in front of the integral \eqref{replica:eq}. Note that it is crucial to select the term of order 1 in $n$ before taking the large $N$ limit. In this example, it is also easy to see that at each order in $1/z$ we have only a finite number of graphs, thus yielding a polynomial in $n$. They consist in several solid faces such that each two faces share at most one vertex. Moreover, these faces are enclosed in a single dotted face, in particular all these graphs are connected.

\begin{figure}
\begin{equation*}
\begin{array}{cc}
\parbox{2.8cm}{\includegraphics[width=2.8cm]{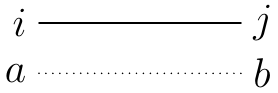}}
\quad\rightarrow\quad\displaystyle{\frac{1}{z}}\delta_{ij}\delta_{ab}\qquad&
\qquad\parbox{3cm}{\includegraphics[width=3cm]{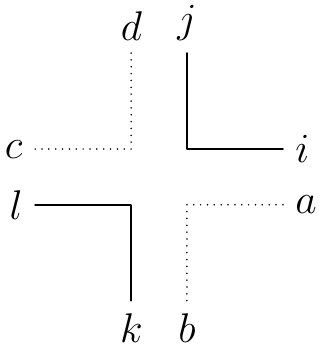}}
\quad\rightarrow\quad\displaystyle{\frac{a^{2}}{2}}\delta_{ij}\delta_{kl}\delta_{ab}\delta_{cd}\qquad
\end{array}
\end{equation*}
\caption{Feynman rules in the Gaussian case}
\label{FeynmanGaussian:fig}
\end{figure}

Furthermore, taking the derivative with respect to $z$ inserts a cilium on one of the edges and removes all symmetry factors. The sum over graphs with a cilium is nothing but $G(z)$ and obeys the equation, in the large $N$ limit,
\begin{equation}
    G(z)=\sum_{k=0}^{\infty} \frac{\alpha^{k}G^{k}(z)}{z^{k+1}}=\frac{1}{z-\alpha  G(z)}.
\end{equation}
Indeed, if we remove the solid face containing the cilium, we get $k$ copies of $G(z)$, where $k$ is the number of vertices of that face, see figure \ref{Gaussian_decomposition:fig}. The solution that behaves as $1/z$ for large $z$ is
\begin{equation}
    G(z)=\frac{z}{2\alpha^{2}}\bigg(1-\sqrt{1-\frac{4\alpha^{2}}{z^{2}}}\bigg).
\end{equation}
Finally, from the cut of the square root on the negative real axis, we obtain Wigner semi-circle law \eqref{semicircle:eq} in the large $N$ limit.

\begin{figure}
\begin{equation*}
\parbox{5cm}{\includegraphics[width=5cm]{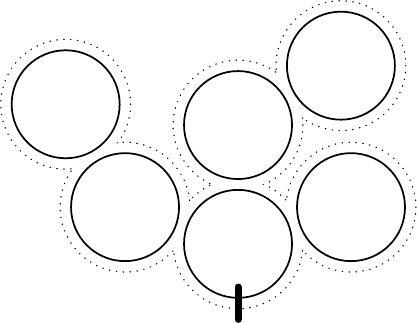}}
\qquad\rightarrow\qquad
\parbox{1.5cm}{\includegraphics[width=1.5cm]{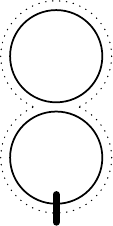}}\quad
\parbox{1.5cm}{\includegraphics[width=1.5cm]{Gaussian_replicas_final1}}\quad
\parbox{1.5cm}{\includegraphics[width=1.5cm]{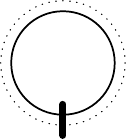}}
\end{equation*}
\caption{A graph to $G(z)$ in the large $N$ limit and its decomposition}
\label{Gaussian_decomposition:fig}
\end{figure}

\subsection{Cumulants and the replica effective action}

When we substitute $J=\frac{XX^{\dagger}}{\sqrt{N}}$, the generating function of the cumulants plays the role of an interacting potential for the replicas $V_{0}(X,X^{\dagger})=\log\big\langle\exp \text{Tr}\big(M\frac{XX^{\dagger}}{\sqrt{N}}\big)\big\rangle$, so that  \eqref{replica:eq}
writes
\begin{equation}
   G(z)=-\frac{1}{N} \frac{\partial}{\partial z}
   \bigg(
   \int dX^{\dagger}dX\,
  \exp\Big\{-z\text{Tr}(X^{\dagger}X)\,
   +V_{0}(X,X^{\dagger})\Big\}
   \bigg)_{\text{order 1 in $n$}}\label{replicapotential:eq}
\end{equation}
Although the expansion of $V_{0}(X,X^{\dagger})$ in powers of $X,X^{\dagger}$ (see \eqref{cumulantexpansion:eq}) involves a power series, at any order in $1/z$, only a finite number of cumulants $C_{G}$ appear. This expansion is based on Feynman diagrams, treating 
$V_{0}(X,X^{\dagger})$ as a perturbation.

Motivated by the quantum field theory analogy, we introduce the effective potential
\begin{equation}
    V(t; X, X^{\dagger})=\log\int d Y^{\dagger}d Y\,
  \exp\left\{-\frac{\text{Tr}( Y^{\dagger} Y)}{t}+V_{0}( X+ Y, X^{\dagger}+ Y^{\dagger})\right\}-Nn\log\,t
  \label{effective:eq}
\end{equation}
where we have set $t=1/z$ for later convenience. The normalisation of the measure only contains a factor of $\pi^{Nn}$ and we added an extra contribution of $-Nn\log\,t$ in such a way that the Gaussian integral is fully normalized. At $t=0$ (equivalently for $z\rightarrow\infty$), we have $V(t=0; X, X^{\dagger})=V_{0}( X, X^{\dagger})$, since there is no integration over $ Y$ and $ Y^{\dagger}$.

The effective potential obeys the Gaussian convolution identity, see \cite{Zinn},
\begin{equation}
    V(t+s; X, X^{\dagger})=\log\int d Y^{\dagger}d Y\,
  \exp\left\{-\frac{\text{Tr}( Y^{\dagger} Y)}{s}+V( t;X+ Y, X^{\dagger}+ Y^{\dagger})\right\}-Nn\log\, s\end{equation}
Expanding to first order in $s$ yields the differential equation, see also \cite{Zinn},
\begin{equation}
    \frac{\partial V}{\partial t}=
    \sum_{i,a}
    \Bigg(
    \frac{\partial^{2}V}{\partial X_{i,a}\partial\overline{ X}_{i,a}}+
    \frac{\partial V}{\partial X_{i,a}}
    \frac{\partial V}{\partial\overline{ X}_{i,a}}
    \Bigg),
    \label{Polchinski:eq}
\end{equation}
where $1\leq i\leq N$ is a matrix index and  $1\leq a\leq n$ a replica index. This is nothing but a zero dimensional analogue of the Polchinski  exact renormalisation group equation \cite{Polchinski} that governs the flows of effective actions in quantum field theory. Equivalently, \eqref{Polchinski:eq} can be written in integral form,
\begin{equation}
    V(t;X, X^{\dagger})=V_{0}( X, X^{\dagger})+\int_{0}^{t}ds\,
    \sum_{i,a}
    \Bigg(
    \frac{\partial^{2}V(s; X, X^{\dagger})}{\partial X_{i,a}\partial\overline{ X}_{i,a}}+
    \frac{\partial V(s; X, X^{\dagger})}{\partial X_{i,a}}
    \frac{\partial V(s; X, X^{\dagger})}{\partial\overline{ X}_{i,a}}
    \Bigg).
    \label{integral:eq}
\end{equation}
Solving this integral equation iteratively proves to be helpful to establish results order by order in powers of $t=\frac{1}{z}$.

The Green function $G(z)$ and thus also the eigenvalue density can be expressed in terms of the effective potential. Indeed, let us rewrite \eqref{effective:eq} as
\begin{equation}
\int d Y^{\dagger}d Y\,
  \exp\left\{-z\text{Tr}( Y^{\dagger} Y)+V_{0}( Y, Y^{\dagger})\right\} =z^{-nN} \exp V(t=1/z; X=0, X^{\dagger}=0).
  \end{equation}
Deriving this equation with respect to $z$ and using \eqref{Polchinski:eq} and  \eqref{replica:eq}, we obtain
\begin{equation}
   G(z)=\frac{1}{z}+\frac{1}{Nz^{2}}
   \Bigg[\sum_{i,a}
    \bigg(
    \frac{\partial^{2}V(1/z; 0, 0)}{\partial X_{i,a}\partial\overline{ X}_{i,a}}+
    \frac{\partial V(1/z; 0, 0)}{\partial X_{i,a}}
    \frac{\partial V(1/z; 0, 0)}{\partial\overline{ X}_{i,a}}
    \bigg)
   \Bigg]_{\text{order 1 in $n$}}
\end{equation}
In order to compute the RHS of \eqref{replicapotential:eq}, let us expand it in powers of $X, X^{\dagger}$.

The effective potential can be developed on  oriented graphs $G$ as in \eqref{cumulantexpansion:eq}, which we recover for $t=0$,
\begin{equation}
V(t; X, X^{\dagger})=
\sum_{G\text{ oriented graph}}\frac{1}{|\text{Aut}(G)|N^{e(G)/2}}\sum_{1\leq i_{1},\dots,i_{v(G)}\leq N\atop\text{all different}}C_{G}(t;i_{1},\dots, i_{v(G)})\prod_{e\text{ edge}}(XX^{\dagger})_{i_{s(e)}
i_{t(e)}}\label{effectiveexpansion:eq}.
\end{equation}
The oriented graphs $G$ in the previous equation should not be confused with Feynman diagrams appearing in  a perturbative computation, they are merely labels for the terms in the generating functions of the cumulants and in the effective potential.

In particular, only the first non trivial term in this expansion contributes to the Green function, as seen from \eqref{replicapotential:eq},
\begin{equation}
    G(z)=\frac{1}{z}+\frac{1}{N^{3/2}z^{2}}\sum_{1\leq i\leq N}\big[C_{\includegraphics[width=0.8cm]{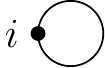}}(1/z;i) \big]_{\text{order 0 in $n$}}.\label{Greenpotential:eq}
 \end{equation}
Notice that only the order 0 in $n$ is necessary since an extra power of $n$ is created by the summation over $a$.

\subsection{Inductive bounds on Green functions}
In order to establish Theorem \ref{Wignerdependent:thm}, let us recall that the cumulants are written as a sum of a Gaussian term and a perturbation. This translates into a similar decomposition for the potential $V(t=0;X, X^{\dagger})=V^{'}(t=0;X, X^{\dagger})+V^{''}(t=0;X, X^{\dagger})$, with $V^{'}(t=0;X, X^{\dagger})=\frac{\alpha^{2}}{2N}\text{Tr}(X^{\dagger}XX^{\dagger}X)$ the quartic interaction derived from the Gaussian cumulant and $V^{''}(t=0;X, X^{\dagger})$ a perturbation.

Then, solving iteratively the integral equation \eqref{integral:eq} as a power series in $t$ yields a similar decomposition for the effective potential $V(t;X, X^{\dagger})=V^{'}(t;X, X^{\dagger})+V^{''}(t;X, X^{\dagger})$, where $V^{'}(t;X, X^{\dagger})$ only involves the quartic interaction  $V^{'}(t=0;X, X^{\dagger})$ whereas $V^{''}(t;X, X^{\dagger})$ contains at least one occurrence of the perturbation, in particular, it vanishes if 
$V^{''}(t=0;X, X^{\dagger})=0$.

Furthermore, let us decompose both terms at order 0 in $n$ using oriented graphs as in \eqref{effectiveexpansion:eq}, so that, as a power series in $t$, 
\begin{equation}
C_{G}(t;i_{1},\dots, i_{v(G)})=\sum_{k=0}^{\infty}t^{k} \big[C^{'(k)}_{G}(t;i_{1},\dots, i_{v(G)})+C^{''(k)}_{G}(t;i_{1},\dots, i_{v(G)})\big].
\label{graphexpansion:eq}
\end{equation}

Using \eqref{integral:eq}, we show by induction on $k$ that the perturbation obeys a bound identical to the assumption of Theorem \ref{Wignerdependent:thm}.

\begin{proposition} \label{bound:pro} The coefficients of the development of the effective potential over graphs satisfy, uniformly in $i_{1},..., i_{v(G)}$,
\begin{itemize}
\item ${\displaystyle N^{v(G)-c(G)-e(G)/2}\big[C^{'(k)}_{G}(i_{1},..., i_{v(G)})}\big]_{\text{order 0 in $n$}}$ is bounded for any  $G$,
   \item ${\displaystyle \lim_{N\rightarrow\infty}\, N^{v(G)-c(G)-e(G)/2}\big[C^{''(k)}_{G}(i_{1},..., i_{v(G)})=0}\big]_{\text{order 0 in $n$}}$ if $G$ is Eulerian,    
   \item ${\displaystyle N^{v(G)-c(G)-e(G)/2}\big[C^{''(k)}_{G}(i_{1},..., i_{v(G)})}\big]_{\text{order 0 in $n$}}$ is bounded if $G$ is not Eulerian.
\end{itemize}
\end{proposition}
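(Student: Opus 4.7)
I would prove Proposition~\ref{bound:pro} by strong induction on the order $k$ in the Taylor expansion of the effective potential in $t$. At $k=0$, the Gaussian coefficient $C^{'(0)}_G$ is non-vanishing only on the oriented two-cycle ($v=2$, $e=2$, $c=1$) where it equals $\alpha^2$, so $N^{v-c-e/2}C^{'(0)}_G=\alpha^2$ is bounded; and $C^{''(0)}_G$ is the initial perturbative cumulant, for which the stated bounds are precisely the hypothesis of Theorem~\ref{Wignerdependent:thm}.

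For the inductive step, I would extract the coefficient at order $t^k$ from the integral equation~\eqref{integral:eq} using the graph expansion~\eqref{effectiveexpansion:eq}. Each $C^{(k)}_G$ is then a finite sum of contributions from two operations on lower-order coefficients: a \emph{Laplacian} $\sum_{i,a}\partial^2/\partial X_{i,a}\partial\bar X_{i,a}$ acting on a single parent graph $G'$ at order $t^{k-1}$, which pairs two half-edges at the summation index $i$ (either merging two distinct vertices of $G'$ or contracting an existing edge into a self-loop), and a \emph{product} $\sum_{i,a}(\partial V/\partial X_{i,a})(\partial V/\partial \bar X_{i,a})$ combining two parent graphs $G_1, G_2$ at orders $t^{k_1}, t^{k_2}$ with $k_1+k_2=k-1$, by identifying a vertex and fusing two dangling half-edges into one new edge. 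For the product, the basic counting
$$v(G)=v(G_1)+v(G_2)-1,\qquad e(G)=e(G_1)+e(G_2)-1,\qquad c(G)\le c(G_1)+c(G_2)-1,$$
together with the normalisation factor $1/N^{e(G)/2}$ in~\eqref{effectiveexpansion:eq}, shows that the scaled quantities are multiplicative, so $N^{v(G)-c(G)-e(G)/2}|C^{(k)}_G|$ is bounded in terms of products of the parent scaled quantities. An analogous but more intricate case analysis handles the Laplacian operation, with the same conclusion.

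To maintain the Eulerian dichotomy, I would observe that the Gaussian sector $V'$ is generated iteratively from the quartic vertex $\text{Tr}(X^\dagger X X^\dagger X)$, which pairs each outgoing $X$ with an incoming $\bar X$ at the same matrix index; both the Laplacian and the product preserve this matrix-index balance, so $C^{'(k)}_G$ is supported on Eulerian $G$ and bounded. For the perturbative sector $V''$, a degree-balance analysis at each vertex of $G$ shows that if the output is Eulerian then the parent graphs appearing in any contribution must either contain an Eulerian factor carrying a vanishing scaled bound (by induction) or else admit a refined scaling inequality providing an extra negative power of $N$, which in both cases forces $N^{v(G)-c(G)-e(G)/2}|C^{''(k)}_G|\to 0$.

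The main obstacle I anticipate is precisely this propagation of the vanishing limit for Eulerian output graphs: the degree-balance argument must be carried out carefully to handle the various ways in which two non-Eulerian parents with matching defects can combine into an Eulerian output, and to show that in those cases the combined scaling is strictly better than the boundedness obtained from the multiplicative estimate above. A secondary difficulty is the careful treatment of the replica index $a$ and the extraction of the order $0$ in $n$ demanded by~\eqref{Greenpotential:eq}: each contraction produces a sum $\sum_a 1 = n$, so topologies generating extra powers of $n$ must be discarded, and this restriction must mesh with the graph bookkeeping; this comes together with the enumeration of sub-cases in the Laplacian operation (whether the two half-edges lie on the same or different edges, whether the merged vertices coincide in the parent, whether merging changes the number of connected components).
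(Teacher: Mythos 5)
Your proposal follows essentially the same route as the paper: induction on the order in $t$ via the integral form of the renormalisation group equation, a case analysis of the Laplacian and product operations on the graph expansion with the exponent $v(G)-c(G)-e(G)/2$ tracked through each operation, and extraction of the order $0$ in $n$ to discard contractions that close a replica loop. The one step you flag as an obstacle --- propagating the vanishing limit when an Eulerian graph might arise from non-Eulerian parents --- is exactly what the paper settles with its combinatorial Lemma~\ref{lemmabound:lem} (a connected oriented graph is either Eulerian or has at least two unbalanced vertices), which shows that the product of two graphs is Eulerian only if both factors are, and yields a strictly smaller $v-c$ when subsequent vertex identifications turn a non-Eulerian graph into an Eulerian one.
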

\begin{proof}

At order 0, the conditions are satisfied by the Gaussian cumulant \eqref{Gaussian:eq} and the non Gaussian ones since these are just the assumptions of Theorem \ref{Wignerdependent:thm}. Let us assume that the conditions hold up to order $k-1$ and use \eqref{integral:eq} to show that they also hold at order $k$.

The derivative with respect to $X_{i,a}$  (resp. $\overline{X}_{i,a}$) acting the graph expansion  \eqref{graphexpansion:eq} removes an outgoing (resp. incoming) half line attached to a vertex labelled $i$. This operation is performed either on a single graph (first term in \eqref{integral:eq}) or on two independent graphs (second term in \eqref{integral:eq}). The subsequent summation over $i$ and $a$ reattaches the remaining half lines, see picture \ref{graphoperations:fig}. Collecting all contributions to a graph appearing on  the LHS of \eqref{integral:eq}  (order $k$) to those  appearing on its RHS (order $<k$) allows us to express an order $k$ term using order $<k$ terms, all to order 0 in $n$. In the sequel, we denote by $d_{+}(v)$ (resp. $d_{-}(v))$ the number of incoming (resp. outgoing) edges to a vertex $v$.

\begin{figure}
\begin{equation*}
\begin{array}{cccc}
{\displaystyle\frac{\partial^{2}V}{\partial X_{i,a}\partial\overline{ X}_{i,a}}}\,:\quad&\parbox{2.8cm}{\includegraphics[width=2.8cm]{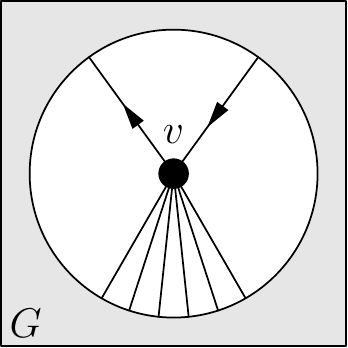}}
&\quad\rightarrow\quad&
\parbox{2.8cm}{\includegraphics[width=2.8cm]{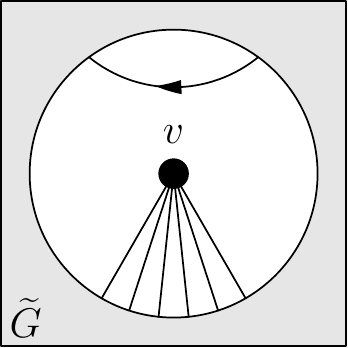}}\\
\\
{\displaystyle\frac{\partial V}{\partial X_{i,a}}
    \frac{\partial V}{\partial\overline{ X}_{i,a}}}\,:\quad
    &\parbox{2.8cm}{\includegraphics[width=2.8cm]{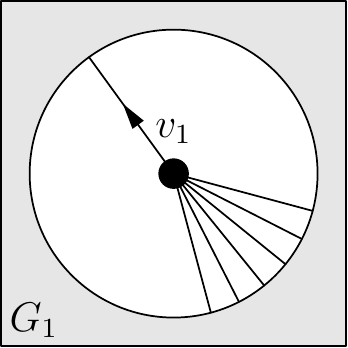}}
\quad\parbox{2.8cm}{\includegraphics[width=2.8cm]{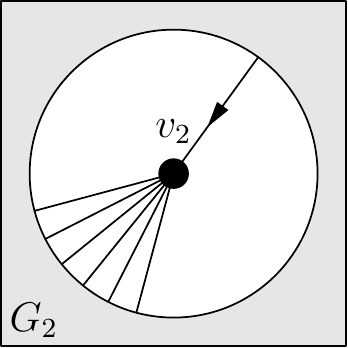}}&\quad\rightarrow\quad&
\parbox{2.8cm}{\includegraphics[width=2.8cm]{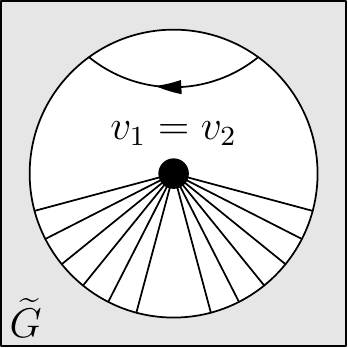}}
\end{array}
\end{equation*}
\caption{Graphical interpretation of the RHS of \eqref{integral:eq}}
\label{graphoperations:fig}
\end{figure}

Let us consider the first operation on a vertex $v$ of a graph $G$ on  the RHS of \eqref{integral:eq} and denote by $\widetilde{G}$ the resulting graph. First note that this operation does not change the nature of the graph (Gaussian or not, Eulerian or not). If $d_{+}(v)=0$ or $d_{-}(v)=0$, then the derivation yields 0. Moreover, we always have $v(\widetilde{G})-c(\widetilde{G})\leq v(G)-c(G)$. This can be compensated by a summation over $i$ if $d_{+}(v)=d_{-}(v)=1$ and both ends belong to the same edge.However, in this case the replica index $a$ is free and the summation over it yields a factor of $n$. Since we are only concerned with the order 0 in $n$, such a term does not contribute. Therefore, we conclude that the contribution of the first term of the LHS of \eqref{integral:eq} to $C^{'(k)}_{\widetilde{G}}(i_{1},..., i_{v(G)})$ and  $C^{''(k)}_{\widetilde{G}}(i_{1},..., i_{v(G)})$ also obey the bound.

In the case of the second operation, let us denote by $G_{1}$ and $G_{2}$ the two graphs on which the operation is performed and by $v_{1}$ the vertex of $G_{1}$ (resp. $v_{2}$ the vertex of $G_{2}$) where we derive with respect to $X_{i,a}$ (resp. $\overline{X_{i,a}}$) and by $\widetilde{G}$ the result of this operation. If $d_{-}(v_{1})=0$ or $d_{+}(v_{2})=0$. the derivative vanish. In the remaining cases, let us first observe that $v(\overline{G})-c(\overline{G})\leq (G_{1})-c(G_{1})+v(G_{2})-c(G_{2})$. Then, the following combinatorial lemma, whose proof is elementary, is helpful.
\begin{lemma} 
\label{lemmabound:lem}
If $G$ is a connected oriented graph, then $G$ is either Eulerian or has at least two vertices such that $d_{+}(v)\neq d_{-}(v)$.
\end{lemma}
As a consequence of this lemma, $\widetilde{G}$ is Eulerian if and only if $G_{1}$ and $G_{2}$ both are, since the operation only modifies the valence of the vertices $V_{1}$ and $V_{2}$. Moreover, in the particular case  $d_{-}(v_{1})=d_{+}(v_{2})=1$, the inequality is strict and  the extra power of $N$ cancels the power arising from the  summation over $i$.

However, it is not the graph $\widetilde{G}$ that contributes directly to the LHS of \eqref{integral:eq} to $C^{'(k)}_{\widetilde{G}}(i_{1},..., i_{v(G)})$ since some indices on the vertices of $G_{1}$ may be equal ton some indices on the vertices of $G_{2}$, leading to the identification of some vertices in $\widetilde{G}$. If we denote the resulting graph by $\overline{\widetilde{G}}$, then $v(\overline{\widetilde{G}})-c(\overline{\widetilde{G}})\leq v(\widetilde{G})-c(\widetilde{G})$. Moreover, an application of  lemma \ref{lemmabound:lem} shows that the inequality is strict if $\overline{\widetilde{G}}$ is Eulerian and $\widetilde{G}$ is not. Then, we may conclude that the contribution of the second operations to to $C^{'(k)}_{\overline{\widetilde{G}}}(i_{1},..., i_{v(G)})$ and  $C^{''(k)}_{\overline{\widetilde{G}}}(i_{1},..., i_{v(G)})$ also obey the bound.
\end{proof}

Finally, Theorem \ref{Wignerdependent:thm} follows from \eqref{Greenpotential:eq} since in the large $N$ limit the non Gaussian contribution to the Green function vanishes in the large $N$ limit, $N^{-1/2}\big[C^{''}_{\includegraphics[width=0.8cm]{cumulantii}}(1/z,i) \big]_{\text{order 0 in $n$}}\rightarrow 0$, as a consequence of proposition \ref{bound:pro}, order by order in $1/z$.

\subsection*{Acknowledgements}
The authors are partially supported by the grant ANR JCJC ``CombPhysMat2Tens".
AT is partially supported by the grant PN 
16 42 01 01/2016.

\noindent Thomas Krajewski, \href{mailto: Thomas.Krajewski@cpt.univ-mrs.fr}{\tt Thomas.Krajewski@cpt.univ-mrs.fr} \\
{\it\small  CPT, Aix-Marseille Universit\'e, Marseille, France, EU}



\noindent
Adrian Tanasa, \href{mailto:adrian.tanasa@labri.fr}{\tt ntanasa@u-bordeaux.fr}\\
{\it\small 
LaBRI, Universit\'e Bordeaux, Talence, France, EU}\\
{\it\small H. Hulubei Nat. Inst.  Phys.  Nucl. Engineering,
Magurele, Romania, EU}\\
{\it\small 
IUF Paris, France, EU }\\
\noindent Dinh Long Vu, \href{mailto:dinh-long.vu@polytechnique.edu}{\tt dinh-long.vu@polytechnique.edu}\\
{\it\small  \'Ecole Polytechnique, Palaiseau, France, EU}


\begin{thebibliography}{99}
\bibitem{dependent1}
J.H. Schencker and H. Schulz-Baldes, \emph{"Semicircle law and freeness for random matrices with symmetries or correlations"},
Math. Res. Lett. {\bf 12} (2005), no. 4, 531-542,
 \href{http://arxiv.org/pdf/math-ph/0505003.pdf}{http://arxiv.org/pdf/math-ph/0505003.pdf}.

\bibitem{dependent2}
Katrin Hofmann-Credner and Michael Stolz
"Wigner theorems for random matrices with dependent entries: Ensembles associated to symmetric spaces and sample covariance matrices"
Electronic Communications in Probability 13 (2008), 401-414.

\bibitem{dependent3}
Olga Friesen and Matthias L\"owe
"The Semicircle Law for Matrices
with Dependent Entries".

\bibitem{replicas-book}
Eugene Kanzieper, "Replica Approach in Random Matrix Theory", in Oxford Handbook of Random Matrix Theory, Ed; G. Akemann, J. Baik and Ph. Di Francesco, Oxford Univ. Press.

\bibitem{replica}
A. Zee, "Law of addition in random matrix theory", Nucl. Phys. {\bf B} 474 (1996) 726-744.

\bibitem{replicas2}
Giovanni M.Cicuta, Henri Orland 
"Real symmetric random matrices and replicas"
Phys. Rev. {\bf E} 74, 051120 (2006)

\bibitem{ami}
M. Disertori  and V. Rivasseau, "Random Matrices and the Anderson Model", arXiv:math-ph/0310021.

\bibitem{Guionnet}
G.W. Anderson, A. Guionnet, O. Zeitouni, "An Introduction to Random Matrices ", Cambridge University Press, (2010).

\bibitem{ami2}
R. Gurau, "The Wilson loop in the Gaussian Unitary Ensemble", arXiv:1604.08856[math-ph].


\bibitem{Mezard}
  A.~Kamenev and M.~Mezard,
  "Level correlations in disordered metals: The replica sigma model,''
  Phys.\ Rev.\ B {\bf 60} (1999) 3944.
  
\bibitem{Zinn}
  J.~Zinn-Justin,
  "Quantum field theory and critical phenomena,''
  Int.\ Ser.\ Monogr.\ Phys.\  {\bf 113} (2002) 1.
  
  \bibitem{Polchinski}
  J.~Polchinski,
  "Renormalization and Effective Lagrangians,''
  Nucl.\ Phys.\ B {\bf 231} (1984) 269.
  
  \end{thebibliography}
\end{document}